\newcommand{\mat}{\boldsymbol}
\newcommand{\AND}{\mathbf{AND}}
\newtheorem{lemma}{Lemma}
\newtheorem{Corollary}{Corollary}
\begin{document}
\title{On Optimizing Power Allocation For Reliable Communication over Fading Channels with Uninformed Transmitter}

\author{M.~Majid~Butt,~\IEEEmembership{Senior~Member,~IEEE,}~Eduard~A.~Jorswieck,~\IEEEmembership{Senior~Member,~IEEE}\\~and~Nicola~Marchetti,~\IEEEmembership{Senior~Member,~IEEE}

\thanks{The material in this paper has been presented in part in ICC 2017, Paris, France \cite{Majid:ICC2017}.}
\thanks{M. Majid Butt and Nicola~Marchetti are with CONNECT center for future networks, Trinity College Dublin, Dublin 2, Ireland. Email:\{majid.butt, nicola.marchetti\}@tcd.ie.}
\thanks{Eduard A. Jorswieck is with Department of Electrical Engineering and Information Technology, TU Dresden, Germany. Email: eduard.jorswieck@tu-dresden.de.}
\thanks{This publication has emanated from research conducted with the financial support of Science Foundation Ireland (SFI) and is co-funded under the European Regional Development Fund under Grant Number 13/RC/2077.}

\thanks{The work of E. Jorswieck was supported by the German Research Foundation, Deutsche Forschungsgemeinschaft,
Collaborative Research Center 912, through the Highly Adaptive Energy- Efficient Computing.
}
}

\maketitle
\begin{abstract}
We investigate energy efficient packet scheduling and power allocation problem for the services which require reliable communication to guarantee a certain quality of experience (QoE). We establish links between average transmit power and reliability of data transfer, which depends on both average amount of data transfer and short term rate guarantees. We consider a slow-fading point-to-point channel without channel state information at the transmitter side (CSIT). In the absence of CSIT, the slow fading channel has an outage probability associated with every transmit power. As a function of data loss tolerance parameters, and minimum rate and peak power constraints, we formulate an optimization problem that adapts rate and power to minimize the average transmit power for the user equipment (UE). Then, a relaxed optimization problem is formulated where transmission rate is assumed to be fixed for each packet transmission. We use Markov chain to model constraints of the optimization problem. The corresponding problem is not convex for both of the formulated problems, therefore a stochastic optimization technique, namely the simulated annealing algorithm, is used to solve them. The numerical results quantify the effect of various system parameters on average transmit power and show significant energy savings when the service has less stringent requirements on timely and reliable communication.
\end{abstract}

\begin{IEEEkeywords}
Energy efficiency, power control, packet scheduling, bursty packet loss, stochastic optimization, simulated annealing, URLLC.
\end{IEEEkeywords}
\section{Introduction}
Internet of things (IoT) is one of the use cases of 5G wireless communications to serve the heterogeneous services. Services like smart city, smart buildings and smart transportation systems depend heavily on efficient information processing and reliable communication techniques. The use of thousands of smart and tiny sensors to communicate regular measurements, e.g., temperature, traffic volume, etc., makes it extremely important to look at the energy efficiency aspect of the problem. Achieving ultra reliability and low latency communication (URLLC) at low energy for the emerging applications in 5th generation (5G) of wireless communication is considered very challenging \cite{She_commag:2017}. Due to short packet size in IoT and machine type communication, finite block-length channel codes, novel diversity techniques, packet dropping mechanisms and control plane communication strategies are considered to enable URLLC \cite{Trillingsgaard_TCOM:2017,She_TWC:2018,Gursoy_Eurasip:2013,Xu_TCOM:2016}.

In 5G networks, context aware scheduling is believed to play key role in smart use of resources \cite{Pahalawatta:TVT2007} and the requirements on reliability and latency are dictated by the nature of the application. More specifically, IoT applications have extremely heterogenous requirements in terms of (average or deadline) latency, reliability and frequency of packet transmissions, and require quality of service (QoS) aware resource allocation mechanisms \cite{Li:TII_2014}. Depending on the application's context, it may not be necessary to receive every packet correctly at the receiver side to avoid experiencing a serious degradation in quality of experience (QoE). For instance, ITU recommendation ITU-T G.1080 (12/2008) specifies a set of requirements for picture/audio that define the quality impairments in addition to average packet loss rates \cite{ITU_standard}. If some packets are lost, the application may tolerate the loss without requiring retransmissions of the lost packets. The application loss tolerance without degrading quality can effectively be exploited to reduce average energy consumption of the devices.

We investigate energy efficient power allocation for the wireless systems with data loss constraints. The packet loss constraints are defined in terms of average packet loss and the maximum number of successively lost packets. The reliability aspect of the communication systems is conventionally handled at upper layers of communication using error correction codes and/or hybrid automatic repeat request (HARQ). Feedback based link adaptation applied in HARQ is dictated by the latency constraints of the application \cite{Choi_TVT:2013}. Our approach is different from the HARQ scheme because the simple device nodes do not possess a data buffer, which makes implementing HARQ systems impossible. Instead, we assume that the applications's QoE does not require every packet to be received successfully, i.e., loss of successive packets can be tolerated, but it must be bounded and parameterized. Video streaming, video conferencing, disaster management systems and interactive gaming are examples of such applications.

In literature, some earlier works have addressed similar problems in different settings and contexts (more at network level). In video streaming applications, it is important to select source coding parameters for various encoded representations of the same content in order to minimize the consumption power while maintaining a high quality of experience for the users \cite{Li:TVT_2017}. Packet scheduling for media streaming with network coding has been studied in \cite{Sheikh:TMM_2014} with the objective to improve the perceived media quality. Wu \emph{et al.} address the problem of bursty packet loss over internet in \cite{Wu:2016}. The authors propose a transmission scheme that trades delay to reduce the distortion in transmitted data. Similar works in \cite{Wu_JSAC:2016,Wu_JSAC:2017} address delay-quality tradeoff in video transmission over communication links. In \cite{Nasralla:2014}, the authors evaluate the subjective and objective performance of video traffic for bursty loss patterns. Reference \cite{Zou:2013} considers real-time packet forwarding over wireless multi-hop networks with lossy and bursty links. The objective is to maximize the probability that individual packets reach their destination before a hard delay deadline. In a similar study, the authors in \cite{Aditya_TWC:2010} investigate a scenario where multimedia packets are considered lost if they arrive after their associated deadlines. Lost packets degrade the perceived quality at the receiver, which is quantified in terms of the "distortion cost" associated with each packet. The goal of the work in \cite{Aditya_TWC:2010} is to design a scheduler which minimizes the aggregate distortion cost over all receivers.

The energy efficiency aspect of the problem has been discussed in many works. Energy can be saved by relaxing various QoS constraints for data transmission. Delay and loss tolerance are two key dimensions to exploit for reducing transmit power. Various works in literature deal with exploiting delay tolerance to optimize transmit power in time varying wireless channels, e.g. \cite{Rajan:IT2004,Lee:TWC_2009,Lee:IT2013,Berry:TIT_2013}. If the latency requirements for the data permit, the transmission can be delayed and the effect of the random nature of fading wireless channels can be minimized by opportunistic scheduling schemes.
The energy aspect of the problem has been addressed in \cite{Neely2009} where the authors investigate intentional packet dropping mechanisms for delay limited systems to minimize energy cost over fading links.

Most of the works in literature characterize performance of the wireless network for average packet loss. In addition to average packet loss, bursty data loss is an important phenomenon which needs to be defined, characterized and analyzed. Some works analyze system for bursty traffic, e.g., the effect of access router buffer size on packet loss rate is studied in \cite{Sequeira:2013} when bursty traffic is present. However, assumption of bursty traffic is different from the notion of bursty data loss. An analytical framework to dimension the packet loss burstiness over generic wireless channels is considered in  \cite{Fanqqin:2013} and a new metric to characterize the packet loss burstiness is proposed. However, these works do not characterize the effect of average and bursty packet loss on the consumed energy at link level. Some recent studies in \cite{Majid:ICC2017,majid_TWC:13, majid:sys2016} characterize the effect of packet loss burstiness on average system energy for a multiuser wireless communication system where the transmit channel state information (CSIT) is fully available or erroneous.

In this work, no CSIT is assumed to be available, which poses new challenges for communication and scheduler design. When CSIT is not available for slow fading channels, channel state dependent power control cannot be applied and outage free communication cannot be guaranteed. For the no-CSIT case, we characterize the average power consumption of the point-to-point wireless channel for various average and bursty packet drop parameters, as well as the outage probability that the application can tolerate loss of a full sequence of packets (successively).

The main contributions of the work are summarized as follows:
\begin{itemize}
  \item We model and formulate the power optimization problem for a point to point system using a Markov chain. The problem constraints involve various parameters that help characterizing QoE for a particular application, including average and successive packet loss bounds, as well as minimum packet size and long term average rate guarantees. We show that the formulated optimization problem is combinatorial and no closed form solution exists.
  \item We propose a solution of the optimization problem based on a low complexity stochastic optimization algorithm, namely Simulated Annealing (SA). The algorithm is based on randomization of input parameters. We numerically evaluate the performance of the proposed solution and verify that the algorithm produces results that are very close to the analytical solution for a special case of the problem.
  \item To reduce the complexity of the problem, we propose a fixed-rate adaptive power transmission scheme. The fixed rate transmission scheme inherits all the constraints of the original optimization problem, but the transmitted rate is the same for every transmission. This helps in reducing computational complexity for the problem.
  \item Simulation results show that our power allocation scheme exploits packet loss tolerance of the application to save considerable amount of energy; and thereby significantly improves the energy efficiency of the network as compared to lossless application case.
\end{itemize}

The rest of the paper is organized as follows. The system model for the work is introduced in Section \ref{sect:system_model} and state space description of the proposed scheme is discussed in Section \ref{sect: state space}. We formulate the optimization problem to minimize average transmit power in Section \ref{sect:optimization}. Then, we discuss a modified optimization problem in Section \ref{sec:fixed_rate} where all the transmissions are of fixed rate. We discuss solution of both of the optimization problems using SA algorithm in Section \ref{sect:stochastic}. The performance of the proposed framework is numerically evaluated in Section \ref{sect:results} and Section \ref{sect:conclusions} summarizes the main results of the paper.

\section{System Model}
\label{sect:system_model}
We consider a point-to-point system such that the transmitter user equipment (UE) has a single packet to transmit in each time slot. The packets are assumed to be variable in size, measured in bits/s/Hz. This is achieved by rate adaptation at physical layer using well known adaptive modulation and coding techniques. Time is slotted and the UE experiences quasi-static independent and identically distributed (i.i.d) block flat-fading such that the fading channel remains constant for the duration of a block, but varies from block to block whereas duration of the block equals one time slot.

We assume no CSIT, but the transmitter is aware of the fading channel distribution. Depending on the scheduling state $i$ (explained later in Section \ref{sect: state space}), the UE transmits with a fixed power $P_i\leq P_{m}$ to transmit a packet with size $R_i$ bits/s/Hz, and waits for the feedback. $P_m$ is the peak transmit power constraint for the transmitter. For convenience, the distance between the transmitter and the receiver is assumed to be normalized.

For a transmit power $P_i$, and channel fading coefficient $h$, the outage probability for the failed transmission (channel outage) is denoted by $\epsilon_i$ such that,
\begin{equation}\label{eqn:outage}
  \epsilon_i=\Pr\left[\log_2\left(1+\frac{P_i|h|^2}{N_0}\right)<R_i\right]
\end{equation}
where $N_0$ is additive white Gaussian noise power.

If the transmitted packet is received at the receiver correctly, the receiver sends back a positive acknowledgement (ACK) message to the UE. If it is not decoded at the receiver, a negative acknowledgement (NAK) is fed-back to the UE. The feedback is assumed to be perfect without error. Note that a power and/or rate adaptation based on the feedback can be applied even without CSIT.
Feedback based power allocation belongs to Restless Multi-armed Bandit Processes where the states of the UE in the system stochastically evolve based on the current state and the action taken. The UE receives a reward depending on its state and action. The next action depends on the reward received and the resulting new state. In this work, we investigate the effect of feedback based sequential decisions in terms of UE consumed average power.

\subsection{Problem Statement}
A single packet arrives at the transmit buffer of the UE in every time slot. The UE's data buffer has no capacity to store more than one packet.\footnote{Note that the buffer capacity is given by the largest rate $R_{max}$ bits/s/Hz, that can be transmitted in any state.} This is a typical scenario for a wireless sensor network application where data measurements arrive constantly after regular fixed time intervals. The UE is battery powered, which needs to be replaced after regular intervals. It is therefore, important to save transmit energy as much as possible. Depending on the application, the UE has two constraints on the reliability of data packet transfer \cite{majid_TWC:13, majid:sys2016}:
\begin{enumerate}
  \item \emph{Average packet drop/loss rate} $\gamma$ is the parameter that constraints the average number of packets dropped/lost,
      \begin{equation}\label{eqn:avgdrop}
        \gamma=\lim_{t\to \infty} \frac{\mbox{Packets dropped}}{\mbox{Packets transmitted}}
      \end{equation}
  \item Maximum number of packets dropped successively. This is called \emph{bursty packet drop constraint}. The parameter $N$ denotes the \emph{maximum number of packets allowed to be dropped successively without degrading QoE below a certain level}. Mathematically, the distance $r(q,q-1)$ between $q^{th}$ and $q^{th}-1$ correctly received packets measured in terms of number of successively lost packets is constrained by parameter $N$, i.e.,
      \begin{equation}
        r(q,q-1)\leq N.
        \label{eqn:successive}
      \end{equation}
\end{enumerate}
Due to transmit power constraint, it is not possible to provide the guarantee in (\ref{eqn:successive}) with probability one. Given at least $N$ packets have been lost successively by time instant $t-1$, we define a parameter $\epsilon_{out}$ at an instant $t$ by the probability that the $N+1-th$ packet is lost, i.e.,
      \begin{equation}
        \epsilon_{out}=\Pr\Big(r_t (q,q-1) = r_{t-1}(q,q-1)+1|r_{t-1}(q,q-1)\geq N\Big)
        \label{eqn:bursty_loss}
      \end{equation}

\begin{figure}[t]
\centering
\includegraphics[width=3.0in]{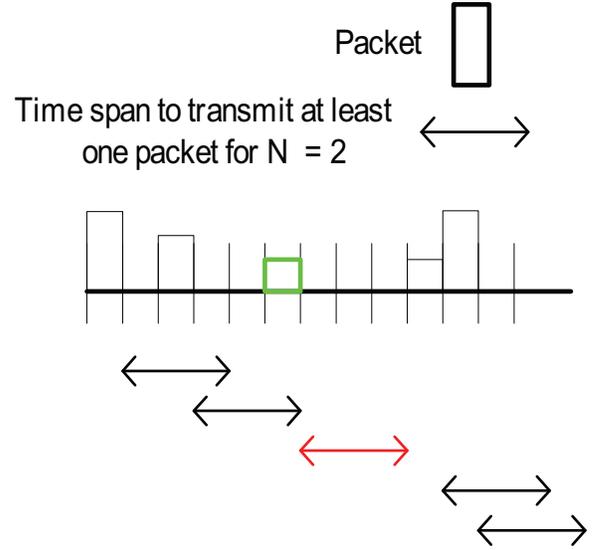}
\caption{Schematic diagram of the system with $N=2$. The transmitted packets can be of variable size as shown in the diagram. The time span for the successive packet loss constraint has been drawn. The red period shows the instance when violation of successive packet loss occurs. The transmission of green packet in third time slot shows that constraint $N=2$ was met.}
\label{fig:schematic}
\end{figure}

Fig. \ref{fig:schematic} shows the schematic diagram for the system. The successive packet drop parameter $N$ equals 2 in the diagram. Whenever a packet is transmitted successfully, it is permitted to drop 2 packets in the next 2 time slots. If a packet is not transmitted successively in $3^{\rm rd}$ successive time slot, it counts as an outage $\epsilon_{out}$. The span of 3 successive packet drops have been depicted as red in the schematic diagram. Note that the transmitted packets have variable size in bits/s/Hz, which is constrained by minimum and maximum rate $R_{min}$ and $R_{max}$, respectively. If a packet is transmitted immediately after the transmission of a packet in previous time slot, its size (rate) is more as compared to the packets transmitted when the packets have been lost already. We come back to rate adaptation and optimization later in the next section.

All of these parameters described in this section contribute to the QoE for the application. The average packet drop rate is commonly used to characterize a wireless network and bounds the QoE for the application. However, bursty packet loss in the applications like smart monitoring sensors can degrade the performance enormously due to absence of contiguous data measurements. On the other hand, the UE can exploit the parameters $\gamma$ and $N$ to optimize average energy consumption if the application is more loss tolerant. If the application is loss tolerant and packet size is fixed, it is advantageous to transmit with a small power if a packet has just been received successfully in the last time slot because the impact of packet loss due to outage is not so severe on cumulative QoE. The consideration of bursty (successive) packet loss poses a new challenge in system modeling as the number of packets lost in previous time slots affect the power allocation decision at time slot $t$.
Clearly, there is a trade-off between transmitting a packet at time $t$ with small power based on the success of transmission in time slots $[t-1, t-2,\dots]$, and transmitting with large power to lower the risk of outage. This trade-off determines the power allocation policy. Let us illustrate the impact of ACKs and NAKs on the tightness of the constraints in the following:

If the permitted average packet loss rate $\gamma$ is very high but $N$ is small, i.e., it is not permitted to lose more than $N$ packets successively without degrading QoE, the effective average packet drop rate becomes much lower than the permitted $\gamma$ in this case. It may work to transmit with small power due to large $\gamma$, but parameter $N$ does not allow it. Due to successive packet drop constraint $N$, transmission of a packet in a time slot $t$ may not be as critical as in any other time slot with $t'\ne t$. If a packet was transmitted successfully in a time slot $t-1$, it implies that transmitting a packet with a lower power is not as risky in time slot $t$. However, when the number of successively lost packets approaches $N$, power allocation needs to be increased proportionally to avoid/minimise the event of missing $N$ packets successively, which may cause loss of important information for wireless sensor networks.

A similar justification can be provided for rate adaptation for transmission in various states. If a transmission is made in the beginning, packet rate can be chosen a bit higher as risk involved due to dropping of a packet is not that great. When more successive packets are dropped, the rate must be decreased to increase the probability of success as depicted in Fig. \ref{fig:schematic}. The rate is lower bounded by $R_{min}$, a system parameter, while the upper bound is obtained from the Shanon capacity with the peak power constraint $P_m$ such that,
\begin{equation}\label{eqn:maxrate}
  R_{max}=\log_2\left(1+\frac{P_m|h|^2}{N_0}\right)
\end{equation}
With every unsuccessful packet transmission, the response of the transmitter is to reduce the rate to increase the success probability, though it is not straight forward to see how this adaptation needs to be applied. The objective is to reduce the average transmit power, therefore rate and power adaptation with every unsuccessful transmission should be optimized in a way that QoE in terms of successful data transfer according to the parameters provided should be met and the transmit power is not wasted unnecessarily.

\section{State Space Description}
\label{sect: state space}
To model the problem, we take history of the packet transmissions in the last $N$ time slots into account. If a NAK is received in time slot $t-1$, it needs to be determined whether transmission in time slot $t-2$ was an ACK or NAK. To capture the time evolution of the packet transmissions in successive time slots, we model the problem using a Markov chain where the next state only depends on the current state and is independent of the history. In a Markov chain, Markov state $i$ is defined by the number of packets lost successively at the transmit time $t$. If a packet was transmitted successfully in time slot $t-1$, the current state $i=0$. If two successive packets are lost in time slots $t-1$ and $t-2$, $i=2$. The maximum number of Markov states is determined by parameter $N$, i.e., the bursty packet drop constraint. In the following, we explain how Markov chain process can effectively be used to model and formulate power allocation optimization problem for average and bursty packet drop constraints.

To explain the state transition mechanism, let us examine the power allocation policy first.
At the beginning of the Markov chain process, a packet is transmitted with power $P_0$ and rate $R_0$ in a time slot $t$ with initial Markov state $i=0$. The channel has an outage probability of $\epsilon_i$ (defined in (\ref{eqn:outage})).
If the received feedback is ACK, the process moves back to state $0$, otherwise moves to state $1$. The lost packet is dropped permanently as UE has no buffer. In state $i=1$, the new arriving packet is transmitted with power $P_1$ and rate $R_1\geq R_{min}$. Thus, power allocation in state $i$ is a function of outage probability $\epsilon_i$ and the rate $R_i$,
\begin{equation}
P_i=f(\epsilon_i,R_i).
\end{equation}
If the packet is transmitted successfully, the next state is zero, 2 otherwise. Similarly, the Markov chain makes a transition to either state $i+1$ or state zero corresponding to the event of unsuccessful or successful transmission, respectively.
When $i=N$ (termination state) and a packet is not transmitted successfully, this defines the outage event for successive packet loss. This is modeled by self state transition probability $\alpha_{NN}$ of staying in state $S_N$ such that,
\begin{eqnarray}
  \alpha_{NN}=\epsilon_N
  =\Pr(S_{t+1}=N|S_t=N).
\end{eqnarray}
$P_N$ is chosen such that $\alpha_{NN}\leq \epsilon_{out}$ where $\epsilon_{out}$ is a system parameter defined in (\ref{eqn:bursty_loss}). If a packet is lost in state $N$, we want Markov process to stay in state $N$ for the next time slot to minimize further degradation in QoE as rate and power levels in state $N$ are designed to maximize the possibility of successful transmission.

The state transitions from state $i$ to $j$ occur with a state transition probability $\alpha_{ij}$.
It is a function of parameters $\gamma, N$ and channel distribution. For every transmit power $P_i$, there is an associated state transition probability $\alpha_{ij}$.

Formally, the state transition probability $\alpha_{ij}$ from the current state $S_t=i$ to next state $S_{t+1}=j$ is defined by,
\begin{eqnarray}
\alpha_{ij} &=& {\rm Pr}(S_{t+1}=j|S_t=i)\\
&=&\begin{cases}
              1-\epsilon_i, & \mbox{if ACK Received},\forall i, j=0  \\
              \epsilon_i, & \mbox{if NAK Received},i\ne N,j=i+1,\\&0 \leq \epsilon_i \leq 1 \\
              \epsilon_N,&\mbox {if NAK Received},i= N,j=N  \\
              0, & \mbox{otherwise}
            \end{cases}
\end{eqnarray}
where $\epsilon_i$ is given by (\ref{eqn:outage}).
The resulting state diagram is shown in Fig. \ref{fig:state_dia}.
The state transition probability matrix $\mathbf{A} = [\alpha_{ij}]_{i,j=0}^N$ takes the form
\begin{equation}\label{eqn:A}
  \mathbf{A}=\left(
               \begin{array}{ccccc}
                 1-\epsilon_0 & \epsilon_0 & 0 &\dots & 0 \\
                 1-\epsilon_1  & 0& \epsilon_1 &\dots& 0 \\
                  \ddots&\ddots&\ddots&\ddots&\ddots\\
                  1-\epsilon_{N-1} & 0 & 0 &\dots& \epsilon_{N-1} \\
                 1-\epsilon_N & 0 & 0 &\dots& \epsilon_{N} \\
               \end{array}
             \right)
\end{equation}
For a time homogeneous Markov chain, the steady state probability for state $j$, $\pi_j$ is defined by
\begin{equation}
\pi_j=\sum_{i\in \mathcal{S}}\alpha_{ij}\pi_i
\end{equation}
where $\mathcal{S}$ defines the state space for the UE states.
Assuming $N_0=1$, for Rayleigh fading and state $i$, the outage probability is given by,
\begin{eqnarray}
  \epsilon_i&=& 1-\exp\Big(\frac{-(2^{R_i}-1)}{P_i}\Big)
\end{eqnarray}
After some algebraic manipulation, the required transmit power $P_i$ is calculated by,
\begin{equation}
P_i=\frac{1-2^{R_i}}{\log(1-\epsilon_i)}
\label{eqn:power}
\end{equation}

\begin{figure}
\center
\includegraphics[width=3.5in]{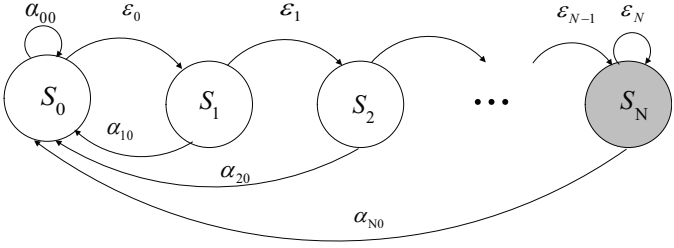}

\caption{State diagram for the Markov chain for the UE power allocation scheme.}
\label{fig:state_dia}

\end{figure}
From the transmit power for every state $i$, the average transmit power consumed is given by,
\begin{equation}\label{eqn:avg_power}
\bar{P}=\sum_{i=0}^N P_i \pi_i.
\end{equation}

\section{Optimization Problem Formulation}
\label{sect:optimization}
The optimization problem is to jointly compute a vector of power values $\mathbf{P}=[P_0,P_1,\dots P_N]$ and $\mathbf{R}=[R_0,R_1,\dots R_N]$ , which satisfies the constraints on packet dropping parameters and minimizes average system energy. The problem is mathematically formulated as,

\begin{eqnarray}
\label{eqn:optimization}
&&\min_{\mathbf{P,R}} \Bar{P}=\sum_{i=0}^N P_i\pi_i\\
s.t.
&&\begin{cases}
\mathcal{C}_1:\sum_{i=0}^N R_i\pi_i\geq R,\quad 0\leq i\leq N\\
\mathcal{C}_2:\sum_{i=0}^{N}\epsilon_{i}\pi_i \leq \gamma\\
\mathcal{C}_3:\epsilon_N \leq \epsilon_{out}\\
\mathcal{C}_4:R_{min}\leq R_i\leq R_{max}
\end{cases}
\label{eqn:constrains}
\end{eqnarray}
The constraints are explained in the following:
\begin{itemize}
  \item $\mathcal{C}_1$ is the average rate constraint, i.e., the average transmitted rate should be greater than $R$.
  \item {$\mathcal{C}_2$ is the average packet loss constraint for the target average packet loss probability $\gamma$. The left hand side term is denoted by achieved average packet loss probability $\gamma_r$. From the state space model described in Section \ref{sect: state space}, it is computed by the sum of the products of steady state and forward state transition probabilities.\footnote{State $N$ is exception where self state transition represents packet loss.}

The outage probability $\epsilon_i$ and the corresponding transmit power $P_i$ for a UE in state $i$ is computed such that the average packet dropping probability constraint $\mathcal{C}_2$ holds.
\item $\mathcal{C}_3$ is the outage constraint. For $i=N$, $\epsilon_N\leq \epsilon_{out}$ where $\epsilon_{out}$ is defined in (\ref{eqn:bursty_loss}). $\epsilon_i$ cannot be determined directly and needs to be optimized for the system parameters.
\begin{equation}
\epsilon_i=f(\gamma,N,\epsilon_{out},h_X(x),R)
\end{equation}
where $h_X(x)$ is the fading channel distribution.}
  \item In $\mathcal{C}_4$, rate $R_i$ is constrained by $R_{min} \leq R_i\leq R_{max}$. $R_{min}$ is the minimum rate that a packet is expected to provide; and depends on the application and the chosen modulation and coding schemes. If we take the example of IoT, we can define a minimum non-zero value of the rate $R_{min}$ that carries the minimum information about the sensed data.
\end{itemize}
This solution of the problem provides successive as well as average packet loss guarantees. Similar to effective capacity, which gives the delay-limited capacity depending on the buffer decay rate \cite{Wu:TWC2003}, this solution also provides a minimum statistical rate guarantee $R_{min}$ with outage $\epsilon_{out}$ over the span of $N$ successive time slots and average rate guarantee $R$ when $t \to \infty$. The packet size $R_i$ is an optimization variable to be jointly computed with $P_i$ for $\forall i$. The offline computed power allocation solution holds for online power allocation as long as the channel distribution remains the same.

The optimization problem is to jointly find rate $\mathbf{R}$ and power $\mathbf{P}$ vectors that result in minimum average power. If we choose $P_i$ too high for small states (states with low number of successive outages), the packets will more likely be transmitted too early at the expense of larger power budget without exploiting loss tolerance of the application and provide good (but unnecessary) QoE. On the other side, if $P_i$ is chosen too low in the beginning, the packets will be lost mostly and we have to transmit with much higher power to meet the \emph{forced} condition that at least one packet has to be transmitted to avoid the sequence of $N$ lost packets.

The formulated optimization problem covers the discussed practical aspects of the IoT applications. We take care of both packet reception frequency (average), packet reception order (successive packet) as well as min rate $R_{min}$ transmitted within any span of $N$ transmissions, and long term average rate $R$ transported by these packets.

\subsection{Complexity of the Programming Problem}
\label{sec:complex}
There are two main difficulties associated with programming problem (\ref{eqn:optimization}). The first challenge is to get a tractable expression for the steady state distribution $\mathbf{\pi}$. It is obtained from the eigenvector of the state transition matrix $\mathbf{A}$ whose components $\alpha_{ij}$ depend on the outage probability vector $\vec{\epsilon}$ which depends on the transmission rate and power allocation (and thus on the optimization variables $\mathbf{P}$ and $\mathbf{R}$). Even though the state transition matrix $\mathbf{A}$ has a special structure as pointed out in (\ref{eqn:A}), there is no closed form solution for the eigenvectors of this structured matrix. Therefore, the dependency of $\mathbf{\pi}$ on $\mathbf{P}, \mathbf{R}$ via $\mathbf{A}$ prohibits an analytical presentation.

The second challenge arises from the structure of the programming problem as such. In order to derive an efficient algorithm to solve (\ref{eqn:optimization}), the problem should be jointly convex in $\mathbf{R}$ and $\mathbf{P}$. Let us only consider the constraint on the average dropping rate $\sum_{i=0}^N \epsilon_i \pi_i \leq \gamma$. The left side contains the expression $\epsilon_i = 1 - \exp^{-\frac{1-2^R_i}{P_i}} = \phi(R_i, P_i)$ which is a function of $R_i$ and $P_i$. The constraint in (\ref{eqn:constrains}) requires an upper bound on the average dropping rate. Since $\phi(R_i, P_i)$ is a concave function, this leads to a non-convex constraint. Therefore, even if a closed form solution for the steady state distribution could be derived, it will not lead to a convex programming problem.
In order to gain more insights into the solution structure of the programming problem (\ref{eqn:optimization}), we consider the special case $N=1$ next.

\subsection{Special Case: $N=1$}
\label{sect:N=1_general}
 In this case, the state transition probability matrix $\mathbf{A}$ reads,
\begin{equation}
  \mathbf{A}=\left(
               \begin{array}{cc}
                 1-\epsilon_0 & \epsilon_0\\
                 1-\epsilon_1 & \epsilon_1\\
               \end{array}
             \right)
\end{equation}
Steady state transition probabilities for states $0$ and $1$ are calculated as,
\begin{eqnarray}
\pi_0 = \frac{1-\epsilon_1}{1+\epsilon_0-\epsilon_1} \\
\pi_1 = \frac{\epsilon_0}{1+\epsilon_0-\epsilon_1}.
\end{eqnarray}
Computing $\gamma_r$ for $\epsilon_1=\epsilon_{out}$ and $\pi_0$ and $\pi_1$ calculated above
\begin{equation}
\gamma_r=\frac{\epsilon_0}{1+\epsilon_0-\epsilon_{out}}.
\label{eqn:gamma}
\end{equation}
We can compute the value of $\epsilon_0$ in closed form that satisfies constraints $\mathcal{C}_2$ and $\mathcal{C}_3$ with equality. Solving (\ref{eqn:gamma}) and $\mathcal{C}_2$ in (\ref{eqn:constrains}) with equality gives,
\begin{equation}
\epsilon_0= (1-\epsilon_{out})\frac{\gamma}{1-\gamma}.
\label{eqn:close_epsilon_zero}
\end{equation}
To compute power levels $P_0$ and $P_1$ for the computed $\epsilon_0$ and $\epsilon_1$, we require rates $R_0$ and $R_1$ that meet $\mathcal{C}_1$ and $\mathcal{C}_4$ and the resulting power levels minimize $\bar{P}$ from (\ref{eqn:avg_power}). There could exist many $(R_0,R_1)$ pairs that meet $\mathcal{C}_1$ and $\mathcal{C}_4$ and computing the unique combination that minimizes $\bar{P}$ in closed form is not possible.
However, under certain assumption on $\mathcal{C}_1$ and $\mathcal{C}_4$, it is possible to compute achievable rates $R_0,R_{1}$ in closed form (but not the optimal ones). As minimum acceptable transmit rate is $R_{min}$ from $\mathcal{C}_4$ and we have $N$ as the critical state, we assume that the rate transmitted in state $N$ is $R_{min}$ to maximize the chance of successful transmission.

For $N=1$ case, using $R_1=R_{min}$, and meeting $\mathcal{C}_1$ with equality, the rate $R_0$ turns out to be,
\begin{equation}
R_0=\frac{R-R_{min}\pi_1}{\pi_0}.
\label{eqn:close_ratezero}
\end{equation}
Note that the solution is feasible only if $R_0\leq R_{max}$. If $R_0>R_{max}$, it implies $R_{min}$ is not enough to meet $\mathcal{C}_1$ and needs to be increased.

\begin{figure*}
\centering
 \subfigure[$\epsilon_{out}=0.1$]
  	{\includegraphics[width=3.1in]{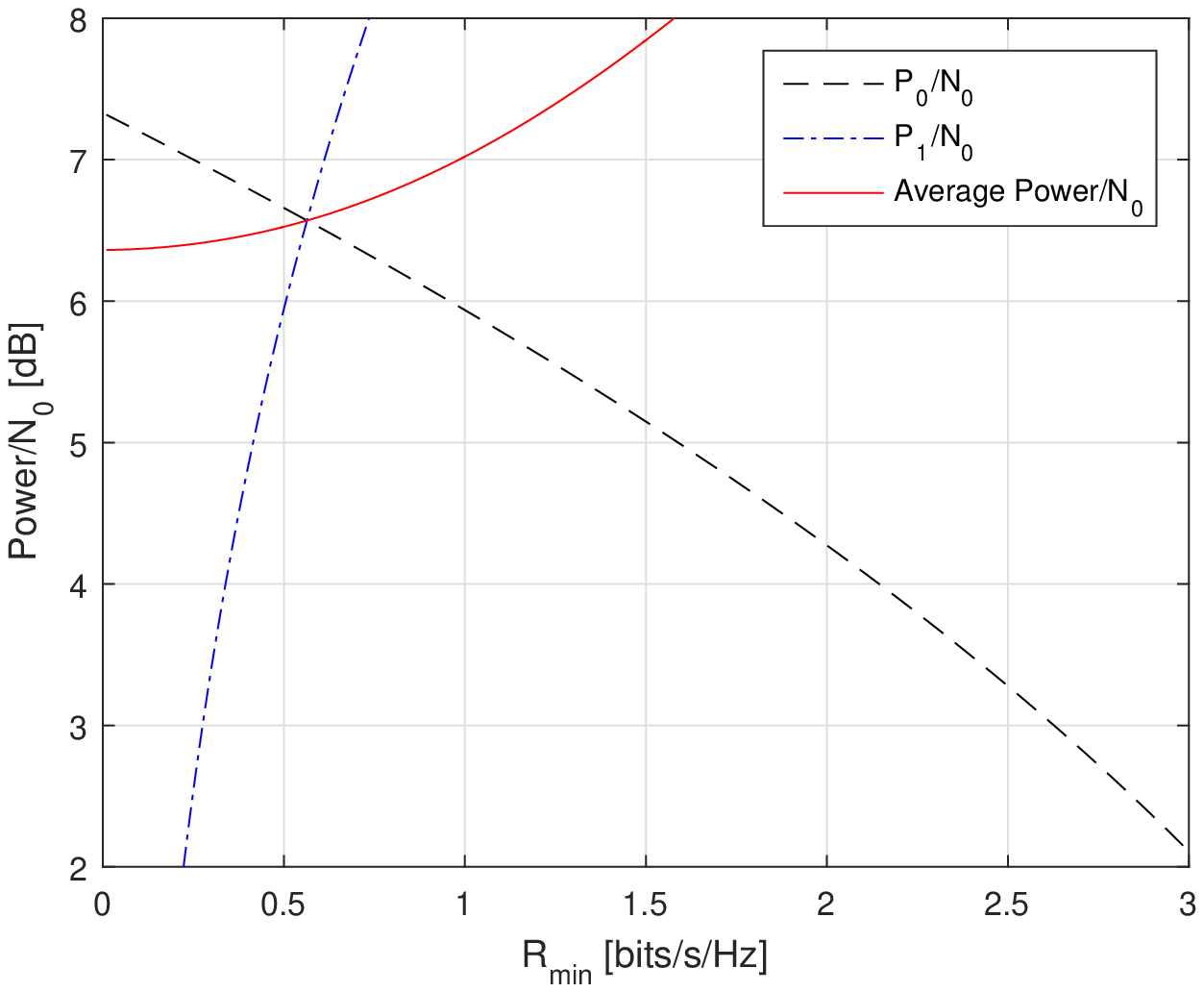}
  \label{fig:1}}
  \subfigure[$\epsilon_{out}=0.2$]
  	{\includegraphics[width=3.1in]{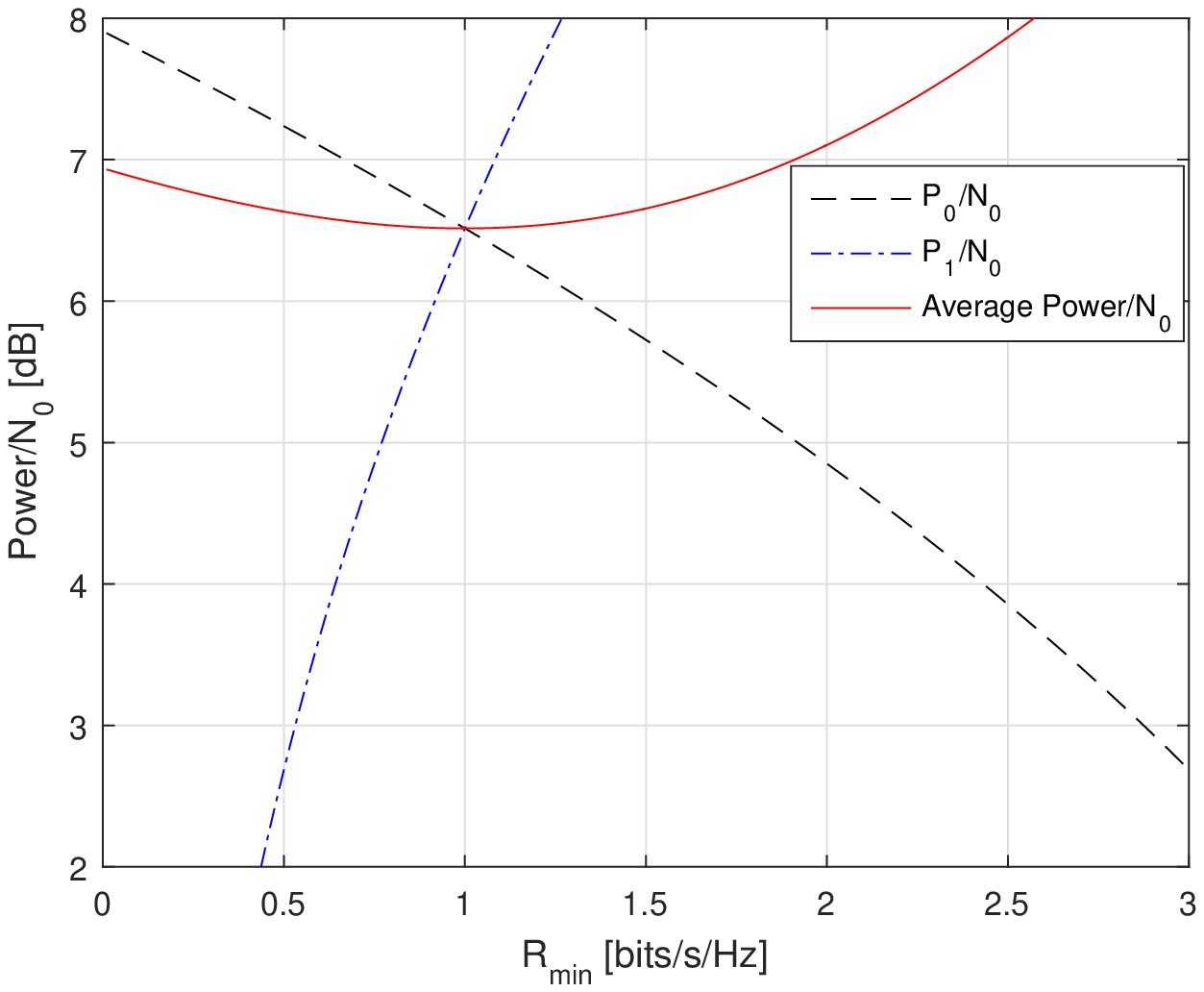}
  \label{fig:2}}
   \caption{System parameters are $R=1,\gamma=0.2,N=1,N_0=1$.}
	\label{fig:general_optimization}
\end{figure*}

The resulting power allocation $P_0$ and $P_1$ are given by
\begin{eqnarray}
P_0&=&\frac{1-2^{\big(\frac{R-R_{min}\pi_1}{\pi_0}\big)}}{\log(1-\frac{\gamma(1-\epsilon_{out})}{1-\gamma})}\\
P_1&=&\frac{1-2^{R_{min}}}{\log(1-\epsilon_{out})}
\end{eqnarray}
The resulting average power $\bar{P}$ in closed form is given by using (\ref{eqn:avg_power}),
\begin{equation}
\Bar{P}=\frac{1-2^{\big(\frac{R-R_{min}\pi_1}{\pi_0}\big)}}{\log\big(1-\frac{\gamma(1-\epsilon_{out})}{1-\gamma}\big)}\pi_0+\frac{1-2^{R_{min}}}{\log(1-\epsilon_{out})}\pi_1
\end{equation}

For this specific case, we evaluate the average power in Fig. \ref{fig:general_optimization}. We fix $\epsilon_1=\epsilon_{out}$ and compute $\epsilon_0$ using (\ref{eqn:close_epsilon_zero}). Then, for the $(\epsilon_{0},\epsilon_1)$ pair, we vary $R_1=R_{min}$ and compute $\bar{P}$ for $R_0$ in (\ref{eqn:close_ratezero}). For a fixed rate $R,\gamma, \epsilon_0$ and $\epsilon_1$, various combinations of $(R_0,R_1)$ provide various average power. We chose values of $\epsilon_1$ such that it is less than $\gamma$ in Fig. \ref{fig:1} and equal to $\gamma$ in Fig. \ref{fig:2}. For every $\epsilon_{out}$, minimum $\bar{P}$ is achieved at a certain $R_{min}$. Please note that the intention is not to optimize $\bar{P}$ in Fig. \ref{fig:general_optimization}; but to show how $\bar{P}$ varies as a function of $R_0$ and there is no mechanism to compute optimal $\bar{P}$ in closed form even for the simplest case of $N=1$. The optimal power allocation can be computed only by jointly searching all possible combinations of $(\epsilon_0,\epsilon_1)$ and $(R_0,R_1)$. We observe that as $\epsilon_{out}$ increases, minimum power is achieved at large $R_{1}$. At $\epsilon_{out}=0.1$, the optimal policy is to transmit with very small $R_1=R_{min}$ while optimal power is achieved at larger $R_1=R_{min}$ when $\epsilon_{out}=0.2$.

Fig. \ref{fig:general_optimization} provides us an interesting insight on optimal power allocation for the problem formulated in (\ref{eqn:optimization}). When $\epsilon_{out}$ and/or target rate $R$ is high, the difference between $R_0$ and $R_1$ is not very large for the optimal power allocation. We use this intuition to propose a relaxed optimization problem in next section and compare performance of solutions of both problems in Section \ref{sect:results}.

\section{Fixed Rate Transmission}
\label{sec:fixed_rate}
In the optimization problem formulated in Section \ref{sect:optimization}, the short term rate guarantee $R_{min}$ over $N$ successive time slots can be quite small as compared to average rate $R$. However, some applications require higher short term minimum rate guarantees such that $R_{min}\to R$. This leads us to a more restrictive optimization problem where packet size for each transmission is fixed to $R$ bits/s/Hz. The resulting optimization problem is formulated as,
\begin{eqnarray}
\label{eqn:optimization:Fixedrate}
&&\min_{\mathbf{P}} \Bar{P}\\
s.t.
&&\begin{cases}
\mathcal{C}_1:\gamma_r\leq \gamma,& 0\leq\gamma\leq 1\\
\mathcal{C}_2:\epsilon_N\leq\epsilon_{out}& 0\leq\epsilon_{out}\leq 1\\
\mathcal{C}_3:R_i=R&\forall i\\
\mathcal{C}_4: P_i\leq P_m,&\forall i,j
\end{cases}
\label{eqn:constraints_fixedrate}
\end{eqnarray}
where $\mathcal{C}_3$ represents the fixed rate constraint and $\mathcal{C}_4$ is the peak power constraint. This implies that largest transmit power at the UE cannot exceed $P_m$ in any state $i$, regardless of the rate. Note that peak power constraint does not explicitly appear in (\ref{eqn:constrains}) because $R_{max}$ depends on $P_m$ via (\ref{eqn:maxrate}) and appears in $\mathcal{C}_4$.
For the modified optimization problem, the objective is to compute power vector $\mathbf{P}$ for fixed rate transmission. The constraints related to packet reception remain the same.
\begin{lemma} For the optimal power allocation in the fixed rate transmission, it holds
 $P_i\leq P_{i+1}$ for all $i \in [0,N]$.
 \label{lem:power_level}
\end{lemma}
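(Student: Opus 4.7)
The plan is a proof by contradiction via an exchange/swap argument. Suppose, toward a contradiction, that the optimal allocation satisfies $P_j>P_{j+1}$ for some $j\in\{0,\ldots,N-1\}$; the aim is to exhibit a feasible modification with strictly smaller $\bar P$, violating optimality. Specializing (\ref{eqn:power}) to the fixed-rate setting gives $P_i=(1-2^R)/\log(1-\epsilon_i)$, which is strictly decreasing in $\epsilon_i$ on $[0,1)$; hence $P_j>P_{j+1}$ if and only if $\epsilon_j<\epsilon_{j+1}$, and the claim reduces to showing $\epsilon_j\ge\epsilon_{j+1}$ at the optimum.

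Next I would use a renewal-theoretic representation of $\bar P$ with state $0$ as the regeneration epoch. Writing $C_i,T_i$ for the expected remaining cost and time of a cycle starting from state $i$, one has $\bar P=C_0/T_0$, with recursions $C_i=P_i+\epsilon_iC_{i+1}$, $T_i=1+\epsilon_iT_{i+1}$ for $i<N$ and $C_N=P_N/(1-\epsilon_N)$, $T_N=1/(1-\epsilon_N)$; a short calculation also gives $\gamma_r=1-1/T_0$. Consider the candidate swap $\tilde\epsilon_j=\epsilon_{j+1}$, $\tilde\epsilon_{j+1}=\epsilon_j$. The values of $C_i,T_i$ for $i\ge j+2$ are unchanged, and unfolding the recursions twice yields
\begin{align*}
\tilde C_j-C_j&=P_{j+1}(1-\epsilon_j)-P_j(1-\epsilon_{j+1}),\\
\tilde T_j-T_j&=\epsilon_{j+1}-\epsilon_j,
\end{align*}
which propagate back as $\tilde C_0-C_0=A_j(\tilde C_j-C_j)$ and $\tilde T_0-T_0=A_j(\tilde T_j-T_j)$ with $A_j=\prod_{k<j}\epsilon_k>0$. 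Strict decrease $\tilde C_0/\tilde T_0<C_0/T_0$ therefore reduces to the one-line ``exchange inequality''
\[P_{j+1}(1-\epsilon_j)-P_j(1-\epsilon_{j+1})<\bar P\,(\epsilon_{j+1}-\epsilon_j),\]
which, in the normalized variables $Q_i=P_i-\bar P$, becomes $Q_{j+1}(1-\epsilon_j)<Q_j(1-\epsilon_{j+1})$. This is immediate when $Q_{j+1}<0<Q_j$, and when $Q_j,Q_{j+1}\le 0$ it follows from $|Q_{j+1}|>|Q_j|$ combined with $1-\epsilon_j>1-\epsilon_{j+1}$; the remaining sub-case (both $Q$'s non-negative, so some other $P_k$ must lie below $\bar P$ by convexity of the average) would be handled by a short convexity check using the explicit form $P_i=(1-2^R)/\log(1-\epsilon_i)$.

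The main obstacle is feasibility of the swap with respect to $\mathcal C_1$: because $\tilde T_0>T_0$ under the hypothesis, the swap strictly increases $\gamma_r=1-1/T_0$, so if $\mathcal C_1$ is active at the optimum a pure swap is infeasible. I would handle this by replacing the finite swap with an infinitesimal exchange perturbation $(d\epsilon_j,d\epsilon_{j+1})$ of opposite signs, scaled so that $d\pi_0=0$ holds to first order (the scaling follows from $\partial\pi_0/\partial\epsilon_k=-\pi_0^2\sum_{i>k}c_i/\epsilon_k$). Along this $\mathcal C_1$-preserving tangent direction the first-order change of $\bar P$ reduces to the very same exchange inequality above; constraints $\mathcal C_2$ and $\mathcal C_4$ remain slack for a sufficiently small perturbation (for $j=N-1$ note that $\epsilon_N$ is decreased and hence stays within $\epsilon_{out}$), closing the contradiction.
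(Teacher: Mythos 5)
Your route is genuinely different from the paper's: the paper disposes of this lemma with a two-line informal contradiction (if $P_i>P_{i+1}$, the UE would prefer not to transmit in state $i$ and wait for the cheaper state $i+1$, degenerating into a birth--death process that is suboptimal/infeasible), with no renewal decomposition or exchange argument. Your setup is correct and considerably more rigorous as far as it goes: $\bar P=C_0/T_0$, $\gamma_r=1-1/T_0$, the swap identities for $\tilde C_j-C_j$ and $\tilde T_j-T_j$, the back-propagation through $A_j=\prod_{k<j}\epsilon_k$, and the reduction of $\tilde C_0/\tilde T_0<C_0/T_0$ to $Q_{j+1}(1-\epsilon_j)<Q_j(1-\epsilon_{j+1})$ are all right (for $j+1<N$). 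But the argument does not close, for two substantive reasons.

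First, the sub-case $Q_j\ge Q_{j+1}\ge 0$ is not ``a short convexity check'' --- the exchange inequality can actually fail there. Writing $u_i=-\log(1-\epsilon_i)$ so that $P_i=(2^R-1)/u_i$, the inequality $Q_{j+1}(1-\epsilon_j)<Q_j(1-\epsilon_{j+1})$ is equivalent to $g(u_{j+1})<g(u_j)$ for $g(u)=\bigl((2^R-1)/u-\bar P\bigr)e^{u}$ with $u_j<u_{j+1}$; since $g'(u)=e^{u}\bigl((2^R-1)(u-1)/u^2-\bar P\bigr)$ is positive on an interval around $u=2$ whenever $\bar P<(2^R-1)/4$, the finite swap can strictly \emph{increase} $\bar P$ in exactly this regime, so a different perturbation is needed to rule out such configurations. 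Second, the repair for an active constraint $\gamma_r\le\gamma$ does not work as stated: the infinitesimal exchange direction is not the linearization of the swap. The first-order change of $C_0$ along $(d\epsilon_j,d\epsilon_{j+1})$ contains the terms $A_j P'(\epsilon_j)\,d\epsilon_j+A_j\epsilon_j P'(\epsilon_{j+1})\,d\epsilon_{j+1}$ coming from the dependence of the per-slot cost on $\epsilon$, which have no counterpart in the finite-swap differences, so the claim that the constrained first-order condition ``reduces to the very same exchange inequality'' is unjustified; what one actually gets is a KKT condition with a multiplier on $\gamma_r$. A smaller third issue: your formulas for $\tilde C_j-C_j$ and $\tilde T_j-T_j$ assume $j+1<N$; the boundary case $j=N-1$ involves the self-loop $C_N=P_N/(1-\epsilon_N)$ and must be recomputed separately. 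In fairness, the paper's own proof is heuristic and would not survive the same scrutiny, but as written your argument does not establish the lemma either.
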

\begin{proof}
  It is straight forward to prove by contradiction. If $P_i>P_{i+1}$ and the UE is allowed to enter state $i+1$, an optimal decision is not to transmit in state $i$ at all and wait for a transmission in state $i+1$ which requires less power. This is a birth death process where after every $N-1$ time slots, one transmission is made in state $N$ with power $P_N$. This clearly is suboptimal solution, and makes solving problem for most of the realistic $\gamma$ and $N$ values infeasible.
\end{proof}
\begin{Corollary}
The peak power constraint $P_i\leq P_m, \forall i,j$, reduces to $P_N\leq P_m$.
\label{lem:peak power}
\end{Corollary}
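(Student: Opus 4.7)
The plan is to obtain the corollary as an immediate consequence of Lemma \ref{lem:power_level}. Lemma \ref{lem:power_level} asserts that the optimal power vector $\mathbf{P}=[P_0,P_1,\dots,P_N]$ is monotonically non-decreasing in the state index, i.e., $P_i\leq P_{i+1}$ for every $i\in[0,N]$. Chaining these inequalities gives $P_0\leq P_1\leq \dots \leq P_N$, so the largest component of $\mathbf{P}$ is $P_N$.

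From this ordering, I would argue as follows. The constraint $\mathcal{C}_4$ in (\ref{eqn:constraints_fixedrate}) requires $P_i\leq P_m$ for all $i$, which is equivalent to $\max_{i\in[0,N]} P_i \leq P_m$. Since the monotonicity established by Lemma \ref{lem:power_level} identifies $\max_{i} P_i = P_N$, the whole family of inequalities collapses to the single inequality $P_N\leq P_m$. Conversely, if $P_N\leq P_m$ holds, then by monotonicity $P_i\leq P_N\leq P_m$ for every $i$, so the reduced constraint is also sufficient. Hence $\mathcal{C}_4$ is equivalent to the single-state constraint $P_N\leq P_m$.

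There is essentially no obstacle here beyond correctly invoking Lemma \ref{lem:power_level}: the corollary is a one-line consequence of monotonicity plus the definition of a maximum. The only mild subtlety worth flagging in the write-up is that the reduction is valid only under the optimality assumption used in Lemma \ref{lem:power_level}; for a generic (non-optimal) feasible power vector, monotonicity need not hold, so the equivalence with $P_N\leq P_m$ is a statement about the search space of candidate optima rather than about arbitrary feasible points.
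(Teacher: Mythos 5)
Your argument is correct and is exactly the paper's proof: invoke Lemma \ref{lem:power_level} to order the powers as $P_0\leq\dots\leq P_N$, identify $P_N$ as the maximum, and conclude the peak power constraint collapses to $P_N\leq P_m$. Your added caveat that the reduction holds only for the (monotone) optimal candidates, not arbitrary feasible vectors, is a reasonable clarification but does not change the substance.
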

\begin{proof}
From  Lemma \ref{lem:power_level}, $P_i\leq P_{i+1}$, $\forall i$. This implies, $P_N$ is the largest transmit power for any state. Constraining $P_N \leq P_m$ is therefore, enough to apply peak power constraint to full system.
\end{proof}
From Corollary \ref{lem:peak power}, $P_N$ is constrained by $P_m$. However, $P_N$ is also constrained by the power resulting from system parameter $\epsilon_{out}$ via $\mathcal{C}_2$. This implies that the problem is only feasible if the solution satisfies both outage probabilities resulting from the peak power constraint and the outage constraint $\epsilon_{out}$.
Denoting power consumption from $\epsilon_{out}$ by $P_{out}$, the solution is feasible if
\begin{equation}\label{eqn:power_const}
 P_{out}\leq P_N\leq P_m.
\end{equation}
This problem is less flexible as compared to the general optimization problem as no rate adaptation is required at the transmit side. It is worth noting that in spite of reduction in complexity of the problem, the closed form solution of the problem is still not possible due to the challenges explained in Subsection \ref{sec:complex}. In the next section, we discuss special case to get some insight into the problem for the fixed rate transmission case.

\subsection{$N=1$ Case for Fixed Rate Transmission}
\label{sect:N=1}
Let us analyze a special case with $N=1$ for the fixed transmission case.
We compute $\epsilon_0,\epsilon_1$ for a given $\gamma$ for the the fixed transmission rate case, i.e., $R_0=R_1=R$ (as in Section \ref{sect:N=1_general}). It is possible to compute power levels $P_0$ and $P_1$ for the fixed transmission case.

The power levels $P_0$ and $P_1$ are computed from (\ref{eqn:power}) and yield,
\begin{eqnarray}
P_0&=&\frac{1-2^{R}}{\log(1-\frac{\gamma(1-\epsilon_{out})}{1-\gamma})}\\
P_1&=&\frac{1-2^{R}}{\log(1-\epsilon_{out})}
\end{eqnarray}
The resulting average power $\bar{P}$ in closed form is given by using (\ref{eqn:avg_power}),
\begin{eqnarray}\label{}
\Bar{P}&=&\frac{1-2^{R}}{\log\big(1-\frac{\gamma(1-\epsilon_{out})}{1-\gamma}\big)}\pi_0+\frac{1-2^{R}}{\log(1-\epsilon_{out})}\pi_1\\
&=&(1-2^{R})\Bigg(\frac{1}{\log\big(1-\frac{\gamma(1-\epsilon_{out})}{1-\gamma}\big)}\pi_0+\frac{1}{\log(1-\epsilon_{out})}\pi_1\Bigg)\nonumber
\end{eqnarray}
We can classify two distinct regions for the analysis of $\bar{P}$.
\begin{itemize}
  \item {$\epsilon_{out}\leq\gamma$:} For this case, $\epsilon_N=\epsilon_{out}$ and the closed form expressions above hold. When $\epsilon_{out}\leq\gamma$, it implies that outage $\epsilon_N$ in state $N$ should be less than the average outage probability $\gamma$. The optimal decision in the sense of power efficiency in this case is to transmit with power $P_N$ that results in maximum permitted outage $\epsilon_{out}$. This region is termed as bursty packet loss dominant region.
  \item $\epsilon_{out}>\gamma$: For this case, $\epsilon_N=\gamma$. Though, it is permitted to transmit with power $P_N$ that results in $\epsilon_N<\gamma$, but this is not optimal in the sense of minimizing $\bar{P}$. If $\epsilon_{out}>\gamma$, the optimal decision is to transmit with power $P_i,\forall i$ that results in $\epsilon_i=\gamma$, i.e. the power allocation policy is independent of successive packet loss and only determined by average packet loss parameter $\gamma$.
\end{itemize}
From the above characterization, it is clear that $\epsilon_{out}\leq\gamma$ is the most critical region where average power consumption is determined by both the constraints on average packet loss and burst packet loss.
We numerically verify in Section \ref{sect:results} that the power levels computed in closed form for the boundary condition $\epsilon_N=\epsilon_{out}$ are not optimal for $\epsilon_{out}>\gamma$.

As with variable rate problem, the expressions for the power levels cannot be obtained in closed form for $N>1$ when $\epsilon_{out}\leq\gamma$ in spite of fixing rate $R$ for each transmission. The variables $\epsilon_0,\epsilon_1\dots \epsilon_{N}$ are unknown and it is not possible to compute a unique set of $\epsilon_i,\forall i$ in closed form that satisfies $\mathcal{C}_1$ in (\ref{eqn:constraints_fixedrate}).
The optimization problem in (\ref{eqn:optimization:Fixedrate}) is a combinatorial problem as it is hard to compute a unique solution in terms of $\epsilon_i,\forall i$ due to sum of product term in computation of $\gamma_r$. It is therefore, difficult to compute $\mathbf{P}$ that minimizes $\bar{P}$ using convex optimization techniques.

\subsection{Characterization of Critical Regions}
\label{sect:critical_regions}
In the optimization problem in (\ref{eqn:optimization:Fixedrate}), average packet drop rate $\gamma$, successive packet loss constraint $N$ and outage probability $\epsilon_{out}$ affect the average power consumption. It is worth noting that parameter $\gamma$ is the only parameter that controls the 'quantity' of data loss. The parameters $N$ and $\epsilon_{out}$ determine the qualitative effect for the average packet loss rate $\gamma$, i.e., for a fixed $\gamma$, different values of $N$ and $\epsilon_{out}$ result in different QoE for the end user. If we relax $N$ and $\epsilon_{out}$ constraints, we can save more energy at the expense of degradation in QoE without actually dropping more packets.

It is trivial that an increase in the acceptable average packet loss rate $\gamma$ results in a monotonic decrease of average power consumption. However, it is not straightforward to understand the effect of parameters $N$ and $\epsilon_{out}$ on the average power. In \cite{majid_TWC:13}, it has been characterized that there exists a maximum $N$ for a fixed $\gamma$ that results in maximum energy efficiency for the system. Increasing $N$ further, does not result in higher energy efficiency. We further characterize the energy efficiency as a function of qualitative parameters $N, \epsilon_{out}$ by the following lemma:
\begin{lemma}
For a fixed $\gamma$ and $N$, there exists a maximum $\epsilon_{out}=\gamma$ that results in minimum average power consumption. Increasing $\epsilon_{out}>\gamma$ does not help to reduce average power consumption.
\label{lemma:outage}
\end{lemma}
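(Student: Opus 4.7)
The plan is to prove the stronger assertion that $\bar{P}^{\ast}(\epsilon_{out}) = P(\gamma)$ for every $\epsilon_{out}\geq\gamma$, where $P(\gamma) = (1-2^R)/\log(1-\gamma)$ is the per-state power from (\ref{eqn:power}) evaluated at outage $\gamma$. Because $\mathcal{C}_2$ in (\ref{eqn:constraints_fixedrate}) only relaxes as $\epsilon_{out}$ grows, $\bar{P}^{\ast}$ is automatically non-increasing in $\epsilon_{out}$; what remains is to produce a lower bound that depends solely on $\gamma$ and to exhibit a feasible allocation achieving it.

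For the lower bound I would observe that, since $\pi$ is a genuine probability distribution on $\{0,1,\dots,N\}$ regardless of how it is induced by $\mathbf{A}$, both averages of interest become expectations: $\gamma_r = \mathbb{E}_\pi[\epsilon]$ and $\bar{P} = \mathbb{E}_\pi[P(\epsilon)]$, with $P(\epsilon) = (1-2^R)/\log(1-\epsilon)$. A direct differentiation shows that $P'(\epsilon)<0$ and that $P''(\epsilon)$ has the sign of $2+\log(1-\epsilon)$, so $P$ is strictly decreasing and is convex on $[0,1-e^{-2}]\approx[0,0.865]$, which comfortably covers any practical loss regime. Jensen's inequality then gives, for every feasible $\vec{\epsilon}$,
\begin{equation*}
\bar{P} \;=\; \mathbb{E}_\pi[P(\epsilon)] \;\geq\; P\bigl(\mathbb{E}_\pi[\epsilon]\bigr) \;=\; P(\gamma_r) \;\geq\; P(\gamma),
\end{equation*}
where the final step uses $\gamma_r\leq\gamma$ together with the monotone decrease of $P$.

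To close, I would check that the uniform policy $\epsilon_i = \gamma$ for every $i$ attains the bound whenever $\epsilon_{out}\geq\gamma$: it meets $\mathcal{C}_1$ with equality because $\gamma_r = \gamma\sum_i\pi_i = \gamma$, it meets $\mathcal{C}_2$ because $\epsilon_N = \gamma\leq\epsilon_{out}$, and every per-state power collapses to $P(\gamma)$, so $\bar{P}=P(\gamma)$. Combined with the trivial monotonicity, this pins $\bar{P}^{\ast}(\epsilon_{out})$ at $P(\gamma)$ throughout $\epsilon_{out}\geq\gamma$, and therefore no benefit accrues from relaxing $\epsilon_{out}$ past $\gamma$. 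The main obstacle I foresee is the convexity check: $P$ is only convex on $[0,1-e^{-2}]$, so I would need a short swapping argument showing that any candidate optimum placing some $\epsilon_i$ in the concave tail can be replaced by the uniform allocation without increase in $\bar{P}$ — intuitively, because a large $\epsilon_i$ forces other $\epsilon_j$'s to be very small to keep $\gamma_r\leq\gamma$, and the ensuing $P_j$'s more than compensate any saving at state $i$.
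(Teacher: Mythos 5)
Your argument is sound in outline and takes a genuinely different route from the paper, which offers no self-contained proof at all and simply defers to the proof of Lemma~1 in \cite{majid_TWC:13}. You instead give a direct two-sided argument: $\bar P^{*}$ is non-increasing in $\epsilon_{out}$ because enlarging $\epsilon_{out}$ only enlarges the feasible set; a Jensen lower bound $\bar P=\sum_i\pi_iP(\epsilon_i)\ge P(\gamma_r)\ge P(\gamma)$ that uses only the facts that $\pi$ is a probability vector and $\gamma_r=\sum_i\pi_i\epsilon_i$, and is therefore immune to the intractable dependence of $\pi$ on $\mathbf{A}$ lamented in Section~\ref{sec:complex}; and achievability by the uniform allocation $\epsilon_i=\gamma$. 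A useful by-product is that your proof also establishes the paper's unproved assertion in Section~\ref{sect:N=1} that for $\epsilon_{out}>\gamma$ the optimum sets $\epsilon_i=\gamma$ for all $i$. Two small feasibility points are worth stating explicitly: the uniform policy respects the peak-power constraint whenever the problem is feasible at all (any $\vec{\epsilon}$ with $\sum_i\pi_i\epsilon_i\le\gamma$ has some $\epsilon_i\le\gamma$, hence some $P_i\ge P(\gamma)$, so feasibility already forces $P(\gamma)\le P_m$), and it satisfies the ordering of Lemma~\ref{lem:power_level} with equality.

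The one real gap is the one you flag yourself: $P(\epsilon)=(1-2^{R})/\log(1-\epsilon)$ is convex only on $[0,1-e^{-2}]$, and the proposed ``swapping'' repair is an intuition, not an argument. It can be closed cleanly without swapping by passing to the convex envelope $g$ of $P$ on $[0,1]$. Since $P$ decreases to $P(1)=0$ and the tangent at $\epsilon$ meets the axis at $\epsilon=1$ exactly when $-\log(1-\epsilon)=1$, the envelope equals $P$ on $[0,1-e^{-1}]$ and the tangent segment joining $\bigl(1-e^{-1},P(1-e^{-1})\bigr)$ to $(1,0)$ beyond that point; $g$ is convex, non-increasing, and $g\le P$ everywhere. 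Hence $\bar P\ge\sum_i\pi_i g(\epsilon_i)\ge g(\gamma_r)\ge g(\gamma)=P(\gamma)$ for every feasible allocation, with no restriction on where the individual $\epsilon_i$ fall, provided $\gamma\le 1-e^{-1}\approx 0.63$ --- which covers any loss regime of practical interest. For $\gamma>1-e^{-1}$ the bound degrades to $g(\gamma)<P(\gamma)$ and your proof as written no longer pins down the optimum, so the conclusion should carry that (harmless) restriction.
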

Lemma \ref{lemma:outage} can be proved following the proof of Lemma 1 in \cite{majid_TWC:13}.

We numerically quantify the effect of these parameters on the average system energy consumption in Section \ref{sect:results}. For a fixed $\gamma$, increasing $N$ and/or $\epsilon_{out}$ helps in saving energy in the beginning. This implies that the system is in a region where avoiding successive packet loss has significant effect on average power consumption. An increase in $N$ and/or $\epsilon_{out}$ helps system to drop a fraction $\gamma$ of the packets with more degrees of freedom. When we increase $N$ further, the system enters the region where the gap of $N$ packet drops between two successful packet receptions almost never happens for a given $\gamma$. At this point, it does not matter if the system is allowed to drop more than $N$ successive packets are not. Note that increasing $\epsilon_{out}$ has similar effect as increasing $N$; both permitting packets to be dropped successively within some margins. For small $N$, increasing $\epsilon_{out}$ has significant effect on average energy consumption as compared to large $N$. We provide numerical evidence of this characterization in Section \ref{sect:results}.

\section{Stochastic Optimization}
\label{sect:stochastic}
The combinatorial optimization problems in (\ref{eqn:optimization}) and (\ref{eqn:optimization:Fixedrate}), which are not solvable with regular optimization techniques, can approximately be solved using stochastic optimization methods. There are a few heuristic techniques in literature to solve such problems like genetic algorithm, Q-learning, neural networks, etc. All of these techniques rely on randomized inputs to compute a solution at reduced computational complexity as compared to exhaustive search.
Simulated Annealing (SA) is another similar stochastic optimization algorithm with the distinct feature that it helps avoid the solution to get stuck in local minima by introducing a probabilistic process called 'muting' as explained later in this section. The algorithm was originally introduced in statistical mechanics, and has been applied successfully to networking problems \cite{majid_TWC:13, majid:sys2016}. Based on its ability to compute global minima with high probability, we use SA algorithm to solve optimization problem in (\ref{eqn:optimization}) and (\ref{eqn:optimization:Fixedrate}).

\renewcommand{\baselinestretch}{1}
\begin{algorithm}
\caption{Optimization by SA Algorithm for the General Case}
\KwIn{$(\mathbf{A}, T_m,\gamma,\epsilon_{out},P_m$)\;}
$T_m$ = lower bound on temperature\;
$P_{a,0}$= Compute $\bar{P}$ as a function of initial $\mathbf{A}$\;
$\bar{P}^* =P_{a,0}$; $\mathbf{A^*} = \mathbf{A}$\;
$T_b=T_0$\;
\While{$T_b \geq T_m$}{
$T_b = \frac{T_0}{c_{\rm sa}\cdot b+1}$\;
\For{i=0 \KwTo n}{
 Generate a random $\mathbf{\hat{A}}$\;
 Compute $\gamma_r$ for $\mathbf{\hat{A}}$\;
 Evaluate $\mathcal{C}_2$ and $\mathcal{C}_3$\;
 \If {$\mathcal{C}_2$ and $\mathcal{C}_3$ satisfied}{
 \For{j=0 \KwTo r}{
 Generate $\textbf{R}$ and evaluate $\mathcal{C}_1$ and $\mathcal{C}_4$\;
 \If{$\mathcal{C}_1$ and $\mathcal{C}_4$ are satisfied}{
 Solution feasible\;
  Compute power vector $\hat{\mathbf{P}}$ as a function of $\mathbf{\hat{A}}$ using (\ref{eqn:power})\;
  Compute average power $\hat{P}_a$ in (\ref{eqn:avg_power})\;
  $s$ = A random number in range $[0,1]$\;
    \If {$s<\exp \big(\frac{-(\hat{\bar{P}}- \bar{P}_a)}{T}\big)$}{
        $\bar{P_{a}}=\hat{P}_a$\;
        \If {($\hat{\bar{P}}\leq \bar{P}^*$)}{
        $\bar{P}^* =\hat{\bar{P}}$;
        }
    }
  }
   \Else{
   Solution Infeasible\;
   }
 }
}
  \Else{
   Solution Infeasible\;
   }

}
}
\KwOut{$(\bar{P}^*,\mathbf{A}^*$);}
\label{algorithm_1}
\end{algorithm}

In SA algorithm, a random configuration in terms of transition probability matrix $\mathbf{A}$ is generated in each iteration. Average power $\bar{P}$ is evaluated only if constraints in (\ref{eqn:constrains}) are met. If the evaluated $\bar{P}$ is less than the previously computed best solution, the candidate set of outage probabilities $\epsilon_i$, $\forall i$ are selected as the best available solution. However, the candidate set $\epsilon_i$, $\forall i$ can be treated as the best solution with a certain temperature dependent probability even if the new solution is worse than the best known solution. This step is called \emph{muting} and helps the system to avoid local minima. The muting occurs frequently at the start of the process as the selected temperature is very high and decrease as temperature is decreased gradually, where temperature denotes a numerical value that controls the muting process.

In literature, various cooling temperature schedules have been employed according to the problem requirements, such as Boltzmann annealing, fast annealing and adaptive cooling. The cooling schedule determines the convergence rate of the solution. If temperature cools down at a fast rate, the optimal solution can be missed. On the other hand, if it cools down too slowly, optimization requires large amount of time. In this work, we employ fast annealing (FA) \cite{FA} because it provides us reasonably good results. In FA, it is sufficient to decrease the temperature linearly in each step $b$ such that,
\begin{equation}
\label{eqn:BA} T_b = \frac{T_0}{c_{\rm sa}\cdot b+1}
\end{equation}
where $T_0$ is a suitable starting temperature and $c_{\rm sa}$ is a constant, which depends on the requirements of the problem. After a fixed number of temperature iterations, when muting fully stops, the best solution is accepted as an approximation to the optimal solution. Note that the solution provided after a fixed number of temperature iterations is used to keep the computational complexity manageable. To show the convergence behaviour of the solution provided by SA, we compare the SA approximated results with the analytical results for the $N=1$ case.\footnote{A lot of literature is available on providing more accurate measures of convergence for SA algorithm \cite{Granville_TPA:1994}, but going in rigorous mathematical details on the convergence of the approximated solution is out of scope of this paper.}

To apply the SA algorithm and solve the optimization problem in (\ref{eqn:optimization}), we use the following 2-step process:
\begin{enumerate}
  \item First generate a random set of $\epsilon_i\forall i$ and evaluate if $\mathcal{C}_2$ and $\mathcal{C}_3$ are met. The candidate solutions which do not meet $\mathcal{C}_2$ and $\mathcal{C}_3$, are not feasible solutions and they are dropped.
  \item For the candidate solutions that meet $\mathcal{C}_2$ and $\mathcal{C}_3$, we solve the following programming problem:\\
      Find $\mathbf{R}$ that meets $\mathcal{C}_1$ and $\mathcal{C}_4$. That constitutes a feasible solution. For all feasible solutions, we evaluate $\bar{P}$ and choose the $\mathbf{P,R}$ vectors that minimize $\bar{P}$. Note that we need to generate randomized vector $\mathbf{R}$ between values $R_{min}$ and $R_{max}$ to generate a candidate solution.
      \end{enumerate}
Pseudocode for the optimization of problem using SA is presented in Algorithm \ref{algorithm_1}. The complexity
of the solution depends on parameter $N$. For large $N$, size of transition probability matrix grows and it becomes computationally expensive to calculate the optimal matrix.

\begin{algorithm}[t]
\caption{Optimization by SA Algorithm For Fixed Rate Transmission}
$T_b = \frac{T_0}{c_{\rm sa}\cdot b+1}$\;
\For{i=0 \KwTo n}{
 Generate a random $\mathbf{\hat{A}}$ and compute $P_N$\;
 \If {$(\max(\mathbf{P})\leq P_m)\AND (P_N\geq P_{out})$}{
 Solution feasible\;
 Compute $\gamma_r$ for $\mathbf{\hat{A}}$\;
  \If{$\gamma_r<\gamma$}{
  Compute power vector $\hat{\mat{P}}$ as a function of $\mathbf{\hat{A}}$ using (\ref{eqn:power})\;
  Compute average power $\hat{P}_a$ in (\ref{eqn:avg_power})\;
  $s$ = A random number in range $[0,1]$\;
  \If {$s<\exp \big(\frac{-(\hat{\bar{P}}- \bar{P}_a)}{T}\big)$}{
        $\bar{P_{a}}=\hat{P}_a$\;
     \If {($\hat{\bar{P}}\leq \bar{P}^*$)}{
        $\bar{P}^* =\hat{\bar{P}}$;
        }
     }
   }
   }
   \Else{
   Solution not feasible\;
  }
 }
\KwOut{$(\bar{P}^*,\mat{A^*}$);}
\label{algorithm_2}
\end{algorithm}


The solution for the fixed rate transmission optimization problem in (\ref{eqn:optimization:Fixedrate}) is computed using the SA algorithm in a similar way, but it is relatively less complex. As rate is fixed for every transmission, a one step feasible solution comprising $\epsilon_i\forall i$ is selected from the randomly generated candidate solutions that meets $\mathcal{C}_1-\mathcal{C}_4$. For every feasible solution, the objective function is evaluated and the solution that minimizes $\bar{P}$ is selected. To minimize the repetition, pseudocode for one temperature iteration is presented in Algorithm \ref{algorithm_2}.


\section{Numerical Results}
\label{sect:results}

We perform a numerical evaluation of the proposed scheduling scheme in this section. We consider a Rayleigh fading channel with mean $1$ for the point to point link. Peak power is set to $20$ dBW for all numerical examples while the noise variance $N_0$ equals one. $R_{min}$ is set to a small value of 0.001 bits/s/Hz to allow system to choose almost any $R_0\leq R_m,R_1\leq R_m$ combination that minimizes $\bar{P}$. The cooling schedule from (\ref{eqn:BA}) is applied in SA algorithm where number of iterations per temperature value is fixed.

We study the effect of packet loss parameters on the average power consumption for the special case $N=1$ in Fig. \ref{fig:n1_case}, where the results are evaluated using both closed form expressions derived in Sections \ref{sect:N=1_general} and \ref{sect:N=1}; and the SA framework developed in Section \ref{sect:stochastic}. Average transmit power is plotted for $N=1$ and $\gamma=0.2$ in Fig. \ref{fig:n1_case}. Note that $\epsilon_{out}=\epsilon_N$ in the closed form expression. For the fixed power case, the average power consumption is a convex function in $\epsilon_{out}$ for a fixed $\gamma$ and $N$, and a unique optimal $\epsilon_{out}$ can be identified. Let us call it $\epsilon_{out}^*$. If system parameter $\epsilon_{out}\leq\epsilon_{out}^*$, it results in high average power. However, if $\epsilon_{out}>\epsilon_{out}^*$, the system has more flexibility and it is optimal to set $\epsilon_N=\epsilon_{out}^*$ instead to save power. The optimized results with the SA method match closely with the analytically computed results for $\epsilon_{out}\leq\epsilon_{out}^*$ which validate the accuracy of the solution provided by the SA algorithm. For $\epsilon_{out}>\epsilon_{out}^*$, SA method provides the optimal solution in contrast to the suboptimal solution where $\epsilon_N = \epsilon_{out}$ is enforced. For the case of variable rates power allocation, the analytical results and the solutions from the SA method match closely. However, the difference is more in the case of fixed rate. Note that the results for the analytical case cannot be fully computed in closed form and the optimal rate is computed using a scalar search for the optimal $R_{min}$ at fixed $\epsilon_{0}$ values which introduces numerical inaccuracy for the 'semi-analytical' solution for $N=1$.

\begin{figure}
\center
\includegraphics[width=3.5in]{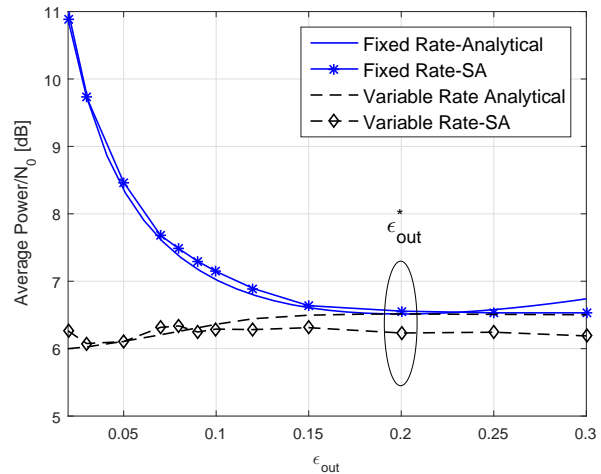}
\caption{Average power for both variable and fixed rate cases for the special case $N=1$. Analytically computed solutions are compared with the solutions produced using the SA algorithm. The target rate $R$ is set to 1 bits/s/Hz and $\gamma$ is fixed to 0.2.}
\label{fig:n1_case}
\end{figure}

\begin{figure}
\center
\includegraphics[width=3.5in]{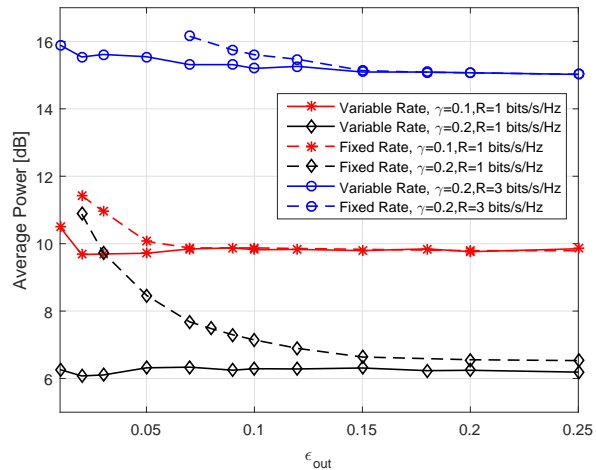}
\caption{Average power consumption for variable and fixed rate schemes for $N=1$ and the SA algorithm.}
\label{fig:scheme_comp}
\end{figure}

\begin{figure*}
\centering
  \subfigure[Fixed Rate, R=1 bits/s/Hz]
  	{\includegraphics[width=3.0in]{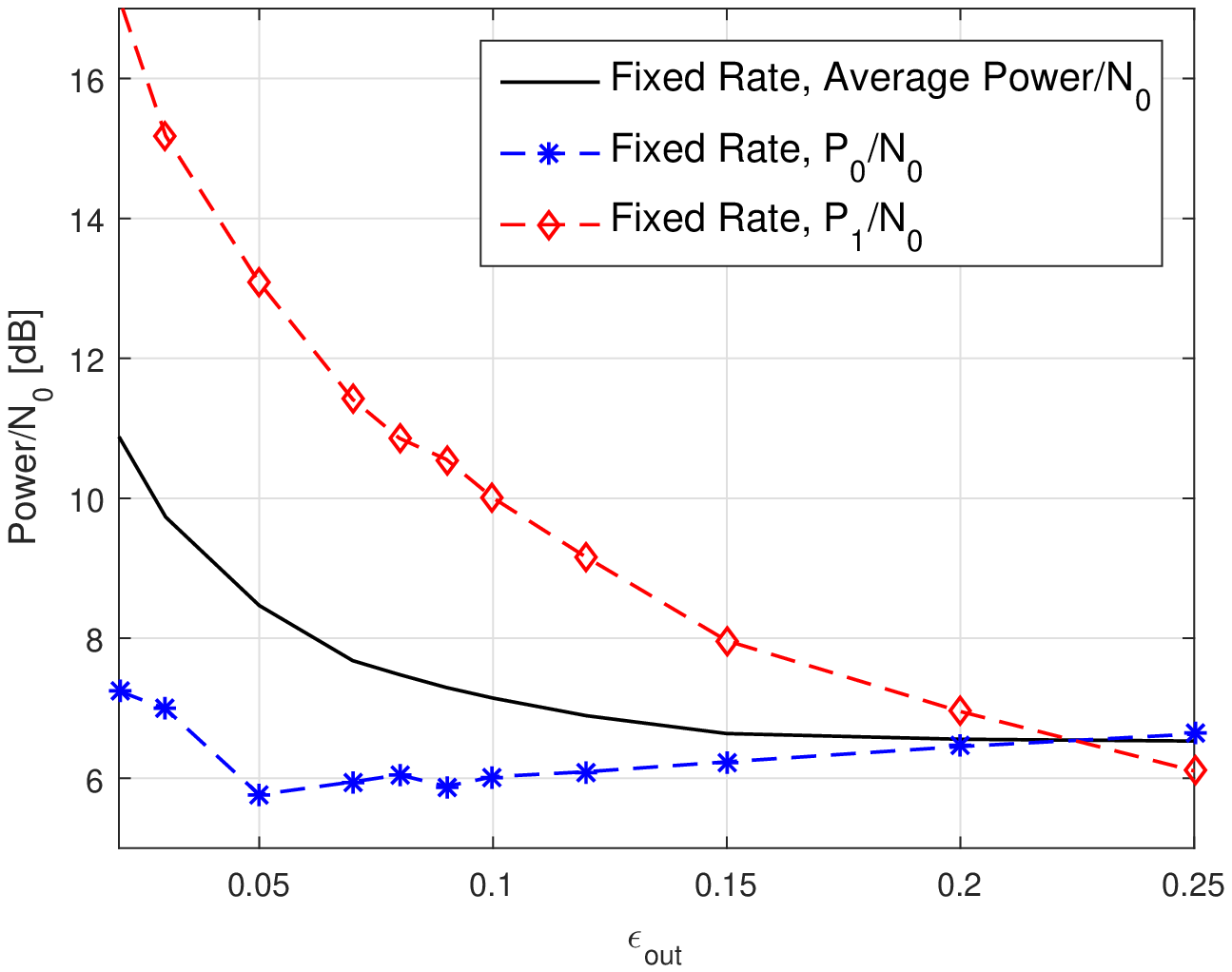}
  \label{fig:fixpower}}
 \subfigure[Variable Rate, R=1 bits/s/Hz, $R_{min}=0.001$ bits/s/Hz]
  	{\includegraphics[width=3.0in]{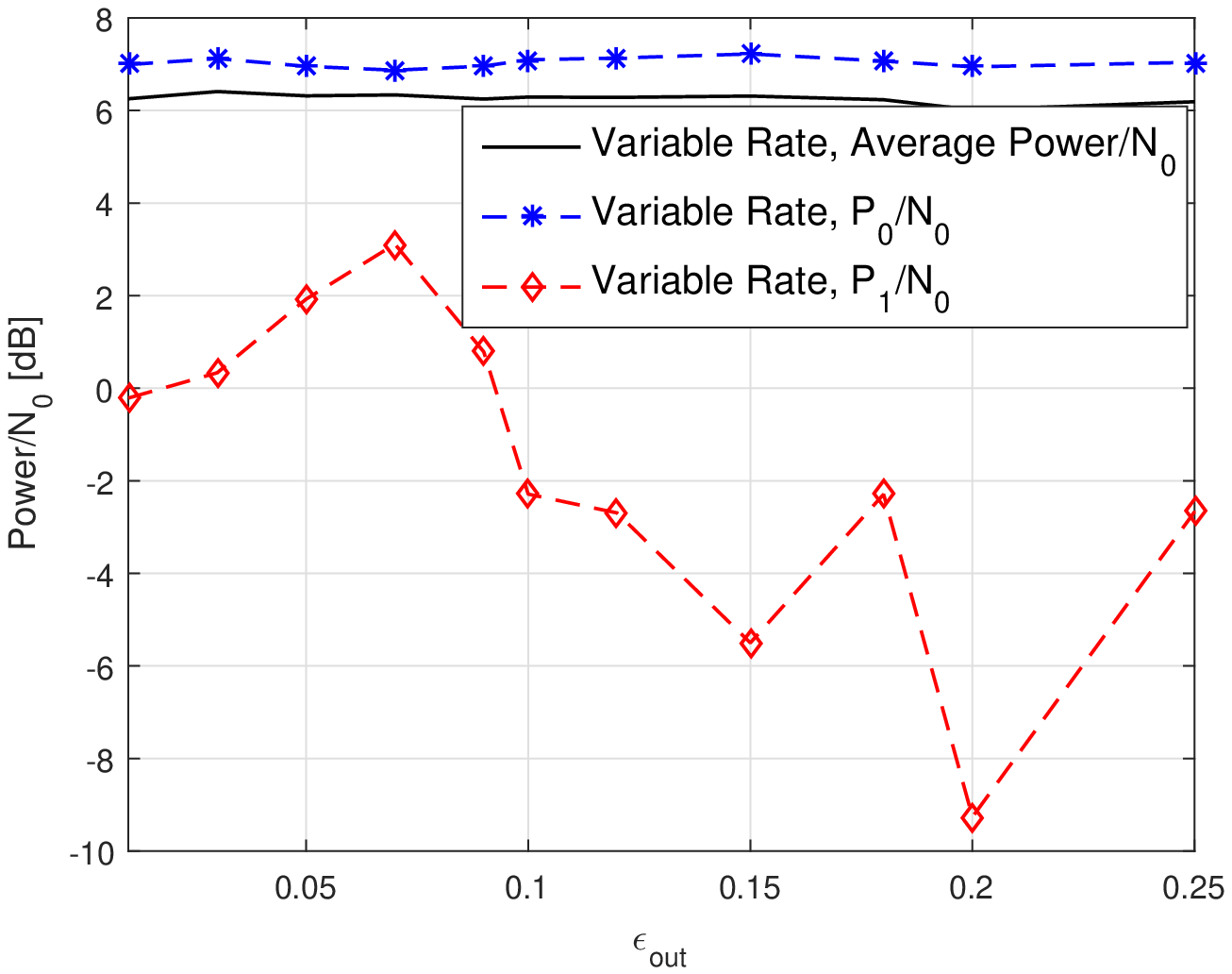}
  \label{fig:varpower R1}}
  \subfigure[Variable Rate, R=1 bits/s/Hz, $R_{min}=0.5$ bits/s/Hz]
  	{\includegraphics[width=3.0in]{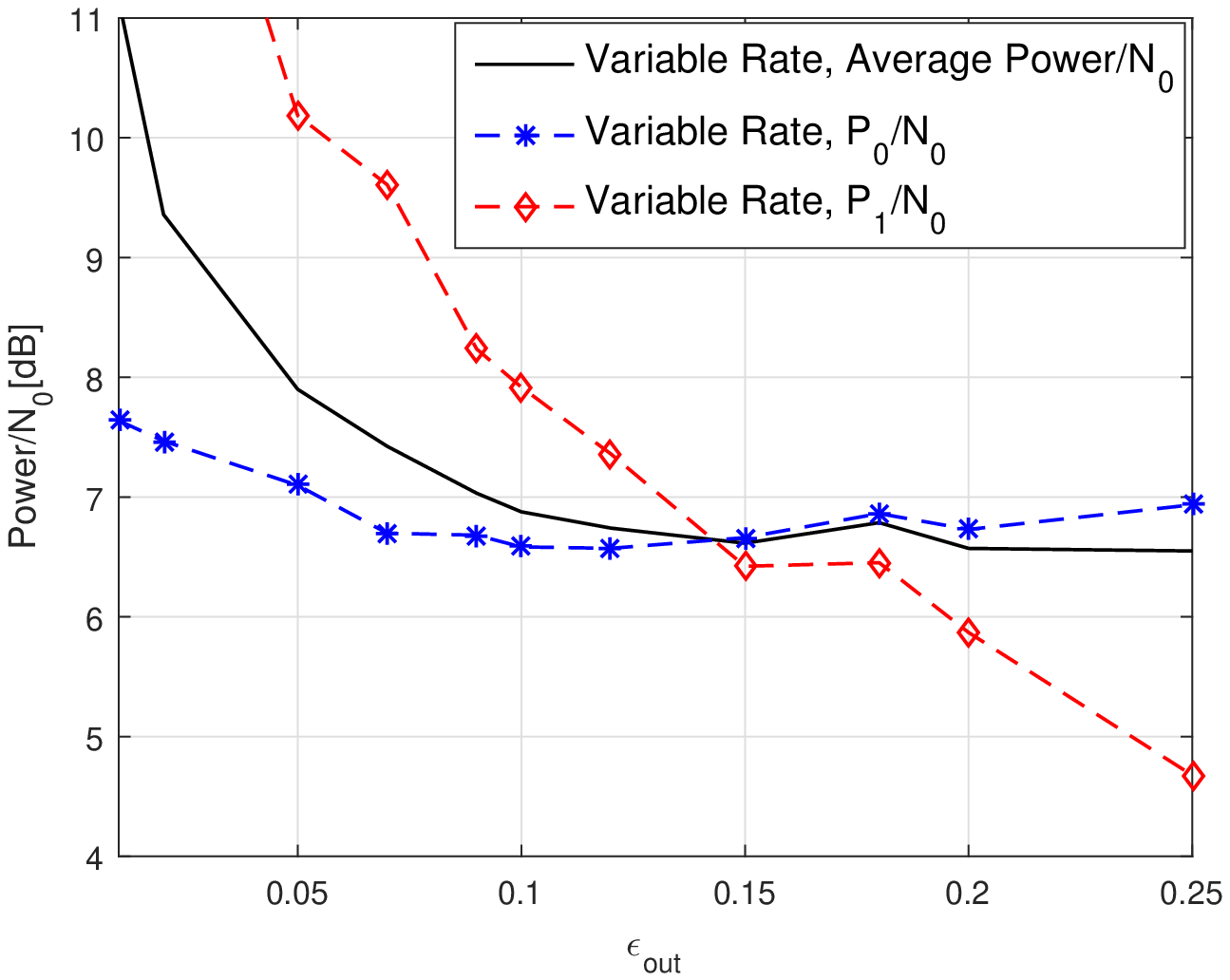}
  \label{fig:varpower R1_minrate}}
    \subfigure[Variable Rate, R=3 bits/s/Hz, $R_{min}=0.001$ bits/s/Hz]
  	{\includegraphics[width=3.0in]{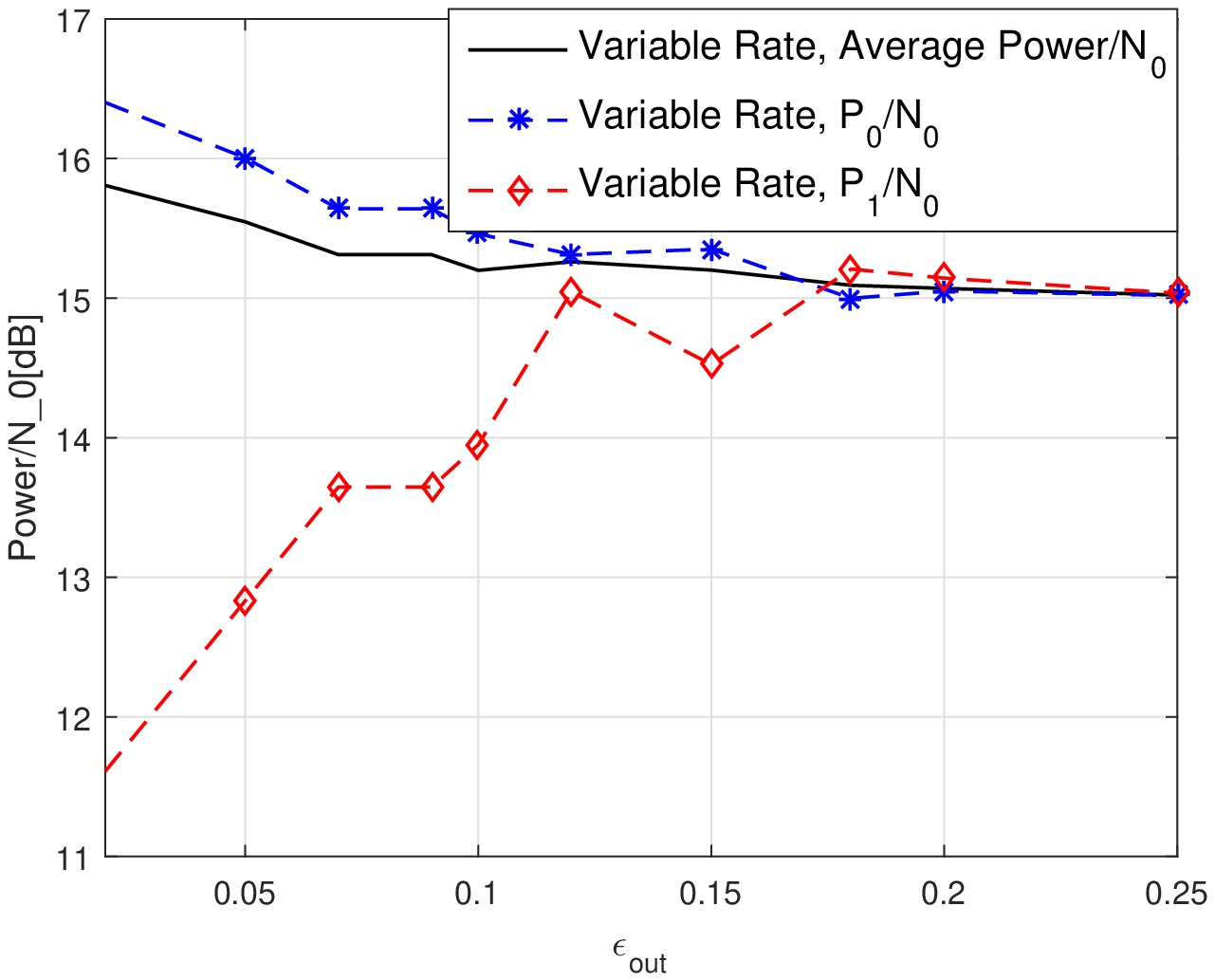}
  \label{fig:varpower R3}}
   \caption{Power levels for the fixed and variable rate scheme for the individual states. The system parameters are $N=1, \gamma=0.2$.}
	\label{fig:power levels}
\end{figure*}

Fig. \ref{fig:scheme_comp} compares the average power for both schemes for $N=1$ case. We use the SA method to compute the results for all the examples in rest of this section.\footnote{It should be noticed that the results obtained from the SA algorithm always show some irregular points due to inherent randomness of the heuristic algorithm in computing the solution.} The results clearly show that variable rate scheme performs better than the fixed scheme at small $\epsilon_{out}$. This is attributed to more flexibility in choosing rates for different states. Another important observation can be noted for $R=3$ bits/s/Hz. Fixed rate scheme cannot provide any quality of service for $\epsilon_{out}<0.07$ and $\gamma=0.2$. The flexibility of variable rate scheme allows to achieve almost identical average power for all $\epsilon_{out}$ including the smaller ones. This leads us to the conclusion that rate adaptation is useful at small $\epsilon_{out}$, while fixed rate transmission becomes almost as efficient at high $\epsilon_{out}$ at reduced complexity. This is clearly more evident at higher rates ($R=3$ bits/s/Hz).

In Fig. \ref{fig:power levels}, we investigate the power allocation for individual states for both schemes. Interestingly, power allocation is opposite in both schemes at small $\epsilon_{out}$. For the fixed rate scheme, the optimal power allocation requires to transmit with small power in state '0' and large power in state '1'. As $R=1$ in both states, it implies $\epsilon_0\leq\epsilon_1\leq \epsilon_{out}$ for the optimal power allocation. For the variable rate case when $R_{min}$ is small (0.001 bits/s/Hz) as in Fig. \ref{fig:varpower R1} and Fig. \ref{fig:varpower R3}, the optimal power allocation requires $P_0\geq P_1$. The optimal policy is to transmit with higher rate and power $P_0\leq P_{m}$ in state '0' and with substantially small power and rate in state 1. This results in optimal average power when rate adaptation is allowed. In Fig. \ref{fig:varpower R1_minrate}, when $R_{min}$ is increased to 0.5 bits/s/Hz for $R=1$, the power allocation for variable rate scheme resembles more to fixed rate scheme due to decrease in flexibility in rate allocation. The fixed transmission policy suffers from the constraint $R_0=R$ and any dropped packets have to be compensated in state $1$ with more power. When $\epsilon_{out}\to \epsilon_{out}^*$ for variable rate power allocation, $P_1$ is relatively small as compared to $P_0$ if $R$ is small (Fig. \ref{fig:varpower R1}) and peak power constraint is large. If $R$ is large as in Fig. \ref{fig:varpower R3}, $P_1$ is increased to meet the rate constraints. However, the intention of the system is to exploit state '0' by adapting power and rate as there is no constraint on $\epsilon_0$.

\begin{figure}
\center
\includegraphics[width=3.5in]{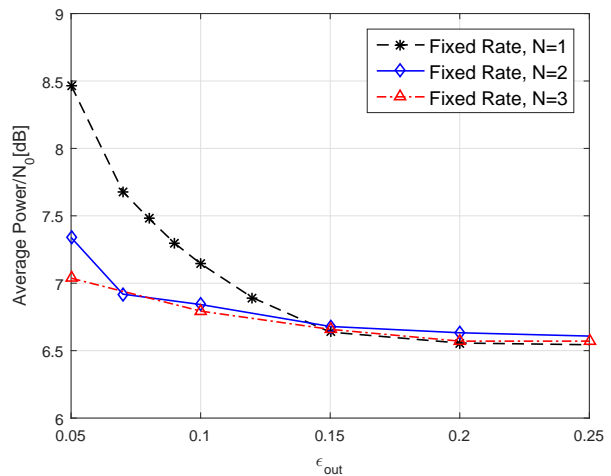}
\caption{Average power as a function of packet loss parameters for different $N$ for the fixed rate scheme. $\gamma$ is fixed to 0.2 and $R=1$ bits/s/Hz.}
\label{fig:general_n}
\end{figure}

Fig. \ref{fig:general_n} compares the average power consumption for the fixed rate scheme for the case $N=1,2,3$ and $\gamma=0.2$. The power levels are optimized using SA algorithm. It is evident that the resulting average power converges for all $N$ to the same minimum value at $\epsilon_{out}^*$. When $\epsilon_{out}\leq\epsilon_{out}^*$, an increase in $N$ for a fixed $\gamma$ helps to reduce average power consumption in general (specially at small $\epsilon_{out}$). More flexibility in packet dropping parameters provides more degrees of freedom and results in energy savings. When $\epsilon_{out}>\epsilon_{out}^*$, the effect of large $N$ vanishes and power saving depends solely on average packet dropping parameter.

\section{Conclusion}
\label{sect:conclusions}
We consider energy efficient scheduling and power allocation for the loss tolerant IoT applications. Data loss is characterized as a function of average and successive packet loss, and the probability that the successive packet loss constraint is not met. These parameters jointly define the QoE and context for an IoT application. In contrast to average packet loss parameter, other loss parameters depend on the packet loss order without actually changing the number of lost packets. By considering bursty packet loss a form of contextual information, we provide another degree of freedom in the scheduling algorithm which can be exploited to reduce energy consumption. Without CSIT, we formulate the average power optimization problem as a function of data loss parameters. First, the generalized power optimization problem is discussed where the transmitted packets are adapted in size such that an average rate and minimum packet size guarantee is provided. Then, we relax the problem to the case where the packet size is fixed for all transmissions. Both of the optimization problems are combinatorial in nature and require a stochastic optimization technique to solve them. For both problems, we compute analytical expressions of average power as a function of system parameters for the special case $N=1$ and compare it with the solution obtained from the proposed simulated annealing algorithm. Both of the analytical results match quite well and validate the solution provided by the heuristic simulated annealing algorithm. We numerically study performance of both schemes and show dependency of power consumption on the parameters that depend on the order of packet drop in addition to packet drop rate.

\bibliographystyle{IEEEtran}
\bibliography{bibliography}

\end{document}